\DeclareMathAlphabet{\mathcal}{OMS}{cmsy}{m}{n}
\theoremstyle{plain} 
\newtheorem{theorem}{Theorem}[section]
\newtheorem{lemma}[theorem]{Lemma}
\theoremstyle{definition} 
\theoremstyle{remark} 
\newcommand{\vs}{\vspace{1.5mm}}
\newcommand{\G}{\mathbb{G}}
\newcommand{\Z}{\mathbb{Z}}
\newcommand{\bits}{\{0,1\}}
\newcommand{\mc}[1]{\mathcal{#1}}
\newcommand{\tb}[1]{\textbf{#1}}
\newcommand{\lb}{\linebreak[0]}
\title{Ciphertext Outdate Attacks on the Revocable Attribute-Based Encryption
Scheme with Time Encodings}
\author{
    Kwangsu Lee\footnote{Sejong University, Seoul, Korea.
    Email: \texttt{kwangsu@sejong.ac.kr}.}
}
\date{}
\begin{document}

\maketitle

\begin{abstract}
Cloud storage is a new computing paradigm that allows users to store their
data in the cloud and access them anytime anywhere through the Internet. To
address the various security issues that may arise in the cloud storage
accessed by a large number of users, cryptographic encryption should be
considered. Currently, researches on revocable attribute-based encryption
(RABE) systems, which provide user revocation function and ciphertext
update function by extending attribute-based encryption (ABE) systems that
provide access control to ciphertexts, are actively being studied.
Recently, Xu et al. proposed a new RABE scheme that combines ABE and
identity-based encryption (IBE) schemes to efficiently handle ciphertext
update and user revocation functionality. In this paper, we show that there
is a serious security problem in Xu et al.'s RABE scheme such that a cloud
server can obtain the plaintext information of stored ciphertexts by
gathering invalidated credentials of revoked users. Additionally, we also
show that the RABE scheme of Xu et al. can be secure in a weaker security
model where the cloud server cannot obtain any invalidated credentials of
revoked users.
\end{abstract}

\vs \noindent {\bf Keywords:} Cloud storage, Access control, Attribute-based
encryption, Revocation, Ciphertext update.

\section{Introduction}

Cloud storage is a computing paradigm that stores data in a centralized cloud
and allows users to access these data anytime anywhere on the Internet using
simple client devices. The main advantages of cloud storage include flexible
accessibility, ease management, and cost savings. Despite these advantages,
cloud storage is inevitably experiencing a variety of security issues because
it stores data in an external cloud storage that is outside of the control of
a data owner. The key reason that the cloud storage security differs from the
existing computer server security is that a cloud server that provides the
cloud storage service is not fully trusted so that the cloud server can
access the stored data and leak the sensitive information
\cite{KamaraL10,Ryan13}.

The easiest way to keep users data secure in cloud storage is to encrypt the
data and store it in the cloud. In this case, in order to share the encrypted
data with many users, it is needed to effectively control access to the
encrypted data according to the authority of the dynamically changing user.
That is, the cloud storage system needs to revoke some users whose
credentials are no longer valid so that revoked users cannot access data. In
addition, the cloud storage system should be able to prevent previously
revoked users to gain access to encrypted data that were created long ago by
using their old private keys after colluding with the cloud server.

To solve these problems in cloud storage, we can use attribute-based
encryption (ABE), which provides access control to ciphertexts. Boldyreva et
al. \cite{BoldyrevaGK08} proposed a revocable ABE (RABE) scheme that extends
the ABE scheme by providing the ability to revoke a user's private key. Sahai
et al. \cite{SahaiSW12} proposed a revocable-storage ABE (RS-ABE) scheme by
extending the concept of RABE that provides the ciphertext update
functionality to prevent previously revoked users from accessing previously
created ciphertexts in cloud storage. After that, Lee et al. \cite{LeeCLPY13}
proposed efficient RS-ABE schemes that can update ciphertexts more
efficiently by combining a self-updatable encryption (SUE) scheme and an ABE
scheme. Therefore, RS-ABE schemes, which provide user revocation and
ciphertext update, can be a solution to the problem of cloud storage
described above. Recently, Xu et al. \cite{XuYMD18} proposed an RABE scheme
that combines an ABE scheme with an IBE scheme by introducing new time
encoding functions to efficiently support ciphertext update than the existing
RS-ABE schemes. Compared with the most efficient RS-ABE scheme of Lee et al.,
the RABE scheme of Xu et al. is more efficient in terms of ciphertext size
and update key size.

In this paper, we show that it is possible to break the security of the RABE
scheme of Xu et al. \cite{XuYMD18}. A key feature of cloud storage is that a
cloud server is not fully trusted \cite{Ryan13}. In other words, the cloud
server faithfully performs the tasks requested by users, but is curious about
the information of the users' data. Thus the cloud server should also be
considered as an inside attacker. However, Xu et al. have overlooked that the
cloud server can be this type of attackers. Suppose that a cloud server is an
inside attacker in the RABE scheme of Xu et al. Then the cloud server first
gathers revoked credentials (revoked private keys) of users. Note that these
revoked credentials are usually no harm to the system since they are safely
disabled in publicly broadcasted key updates. Next, the cloud server derives
a new ciphertext associated with past time from stored original ciphertexts
in cloud storage without modifying the original ciphertexts. Then the cloud
server can sufficiently decrypt the derived ciphertext with past time by
using the revoked credentials and publicly available key updates. Thus the
RABE scheme of Xu et al. cannot be secure against this cloud server.

The organization of this paper is as follows: In Section 2, we first review
the RABE scheme of Xu et al. and their security model. Then, in Section 3, we
discuss our ciphertext outdate attack to the RABE scheme of Xu et al. by
exploiting the time encoding functions of Xu et al. Finally, we conclude in
Section 4.

\section{Revocable Attribute-Based Encryption}

In this section, we review the RABE scheme of Xu et al. \cite{XuYMD18} and
the security model of their RABE.

\subsection{Xu et al.'s Construction}

Before explaining the RABE scheme of Xu et al. \cite{XuYMD18}, we first
define the time encoding functions proposed by them. The \tb{TEncode}
function converts a time period $t$ to a bit string $bt$ of $\log_2 \mc{T}$
length by appending zero value to the prefix of the bit string. The
\tb{CTEncode} function converts a time period $t$ to an encoded bit string
$et$ by finding the first zero value and then converts all remaining values
to zero. The definitions of these two time encoding functions are described
follows:

\begin{description}
\item [\tb{TEncode}($t, \mc{T}$):] It takes a decimal number $t$. It
    encodes $t$ to a bit string $bt$. While $|bt| < \log_2 \mc{T}$, it
    performs $bt = 0 \| bt$. It returns the bit string $bt$.

\item [\tb{CTEncode}($t, \mc{T}$):] It takes a decimal number $t$. Let
    $[k]$ be the set $\{ 1, 2, \ldots, k\}$. It first sets an encoded
    string $et$ as empty one. It next obtains a bit string $bt$ by running
    \tb{TEncode}$(t, \mc{T})$ and sets $chk = false$. For each $j \in
    [\log_2 \mc{T}]$, it performs the following steps: if $bt[j] = 1$ and
    $chk = false$, then it sets $et[j] = 1$; otherwise it sets $chk = true$
    and $et[j] = 0$. It returns the encoded string $et$.
\end{description}

For example, we let the maximum time is $\mc{T} = 2^5$, and two time periods
are $t = 5$ and $t^* = 7$. In this case, the function \tb{TEncode}$(t = 5,
\mc{T})$ returns $bt = 00101$, the function \tb{TEncode}$(t^* = 7, \mc{T})$
returns $bt^* = 00111$, and the function \tb{CTEncode}$(t^* = 7, \mc{T})$
returns $et^* = 00000$.

The RABE scheme of Xu et al. follows the existing design methodology of
previous RABE schemes that combines an ABE scheme, a tree-based broadcast
scheme, and an IBE scheme in bilinear groups. In addition, Xu et al. have
changed the structure of ciphertext to provide ciphertext update
functionality by devising a new ciphertext encoding method. To use a
tree-based broadcast scheme, two additional functions \tb{Path} and
\tb{KUNodes} should be defined. The function \tb{Path} returns a set of nodes
in a binary tree that are in the path from the root node to the specified
leaf node, and the function \tb{KUNodes} returns a set of nodes that are root
nodes of sub-trees where the leaf nodes of all sub-trees can cover the set of
all non-revoked leaf nodes in the binary tree. It is required that if a leaf
node $v$ is not revoked then $\tb{Path}(v) \cap \tb{KUNode}(RL) \neq
\emptyset$ and if $v$ is revoked then $\tb{Path}(v) \cap \tb{KUNode}(RL) =
\emptyset$ where $RL$ is the set of revoked leaf nodes. We omit the detailed
descriptions of \tb{Path} and \tb{KUNodes} functions. For the more detailed
definition of these functions, see the work of Boldyreva et al.
\cite{BoldyrevaGK08}. The RABE scheme of Xu et al. is described as follows:

\begin{description}
\item [\tb{Setup}($1^\lambda, \mc{N}, \mc{T}, n$):] Let $\lambda$ be the
    security parameter, $\mc{N}$ be the maximum number of users, $\mc{T}$
    be the bounded system life time, and $n$ be the maximum number of
    attributes. It obtains a bilinear group $(p, \G, \G_T, e)$ by running
    $\mc{G}(1^\lambda)$ where $p$ is prime order of the groups. Let $g$ be
    a generator of $\G$.
    It selects a random exponent $\alpha$ and sets $g_1 = g^{\alpha}$. It
    also chooses random elements $g_2$, $\{ T_i \}_{i \in [n+1]}$, $U_0, \{
    U_j \}_{j \in [\log_2 \mathcal{T}]} \in \G$ and defines $T(x) =
    g_2^{x^n} \prod_{i=1}^{n+1} T_i^{\Delta_{i,[n+1]}(x)}$ where
    $\Delta_{i,J}(x) = \prod_{j \in J, j \neq i} \frac{x-j}{i-j}$.
    It sets a binary tree $BT$ with at least $\mc{N}$ number of leaves.
    Finally, it outputs a revocation list $RL = \emptyset$, a state $ST =
    BT$, a master key $MK = \alpha$, and public parameters %
    $PP = \big( (p, \G, \G_T, e), g, g_1, g_2, \{ T_i \}_{i \in [n+1]},
    U_0, \{ U_j \}_{j \in [\log_2 \mc{T}]} \big)$.

\item [\tb{GenKey}($id, \mathbb{A}, MK, ST, PP$):] Let $id$ be an identity
    and $\mathbb{A} = (M, \rho)$ be an access policy for attributes where
    $M$ is a $d \times \ell$ matrix. It assigns the user identity $id$ to a
    leaf node $\theta \in BT$.
    For each node $x \in \tb{Path}(\theta)$, it performs the following
    steps: 1) It fetches $\alpha_x$ from the node $x$. If $\alpha_x$ is not
    defined before, then it chooses a random $\alpha_x \in \Z_p$ and stores
    it in the node $x$. 2) Let $\vec{u}$ be a random $\ell$ dimensional
    vector over $\Z_p$ such that $1 \cdot \vec{u} = \alpha_x$. For each row
    $i$ in the matrix $M$, it chooses a random exponent $r_i$ and sets a
    partial private key $PSK_{id,x} = \big( \{ K_{i,0} = g_2^{M_i \cdot
    \vec{u}} T(i)^{r_i}, K_{i,1} = g^{r_i} \}_{i \in [d]} \big)$.
    Finally, it outputs a private key $SK_{id} = \big( \{ PSK_{id,x} \}_{x
    \in \tb{Path}(\theta)} \big)$ and an updated state $ST = BT$.

\item [\tb{UpdateKey}($t, RL, MK, ST, PP$):] Let $t$ be a revocation epoch
    and $RL$ be the revocation list. It obtains a bit string $bt$ by
    running \tb{TEncode}$(t, \mc{T})$. Let $\mc{V}_{bt}$ be the set of all
    $j$ for which $bt[j] = 0$.
    For each node $x \in \tb{KUNodes}(BT, RL, t)$, it performs the
    following steps: 1) It fetches $\alpha_x$ from the node $x$. If
    $\alpha_x$ is not defined before, then it chooses a random $\alpha_x
    \in \Z_p$ and stores it in the node $x$. 2) It chooses a random
    exponent $r$ and obtains a partial key update $PKU_{t,x} = \big( U_0 =
    g_2^{\alpha - \alpha_x} (U_0 \prod_{j \in \mc{V}_{bt}} U_j)^r, U_1 =
    g^r \big)$.
    Finally, it outputs a key update $KU_{t} = \big( \{ PUK_{t,x} \}_{x \in
    \tb{KUNodes}(BT,RL,t)} \big)$.

\item [\tb{DeriveDK}($SK_{id}, KU_{t}, PP$):] Let $SK_{id} = ( \{
    PSK_{id,x} \}_{x \in \tb{Path}(\theta)} )$ and $KU_t = ( \{ PKU_{t,x}
    \}_{x \in \tb{KUNodes}(BT,RL,t)} )$.
    If $\tb{Path}(\theta) \cap \tb{KUNodes}(BT,RL,t) = \emptyset$, then it
    outputs $\perp$. Otherwise, it finds a unique node $x \in
    \tb{Path}(\theta) \cap \tb{KUNodes}(BT,RL,t)$ and retrieves
    $PSK_{id,x}$ and $PKU_{t,x}$ for the node $x$ from $SK_{id}$ and
    $KU_{t}$ respectively.
    Finally it outputs a decryption key $DK_{id,t} = \big( \{ PSK_{id,x},
    PKU_{t,x} \} \big)$.

\item [\tb{Encrypt}($S, t, m, PP$):] Let $S$ be an attribute set, $t$ be
    time, and $m$ be a message. It obtains an encoded string $et$ by
    running \tb{CTEncode}$(t,\mc{T})$. Let $\mc{V}_{et}$ be the set of all
    $j$ for which $et[j] = 0$.
    It chooses a random exponent $s \in \Z_p$ and outputs an original
    ciphertext %
    $CT_{t} = \big( C = e(g_1, g_2)^s \cdot m, C_1 = g^s, \{ C_{2,i} =
    T(i)^s \}_{\rho(i) \in S}, E_1 = U_0^s, \{ E_{2,j} = U_j^s \}_{j \in
    \mc{V}_{et}} \big)$.

\item [\tb{UpdateCT}($CT_t, t', PP$):] Let $CT_t = (C, C_1, \{ C_{2,i} \},
    E_1, \{ E_{2,j} \}_{j \in \mc{V}_{et}} )$ be an original ciphertext for
    time $t$ and $t'$ be update time such that $t \leq t'$.
    If $t' < t$, then it returns $\perp$ to indicate that the time $t'$ is
    invalid. Otherwise, it obtains a bit string $bt$ by running
    \tb{TEncode}$(t, \mc{T})$. It chooses a random exponent $s' \in \Z_p$
    for randomization and outputs an updated ciphertext %
    $CT_{t'} = \big( C = C \cdot e(g_1, g_2)^{s'}, C_1 = C_1 \cdot g^{s'},
    \{ C_{2,i} = C_{2,i} \cdot T(i)^{s'} \}_{\rho(i) \in S}, E_{t'} = (C_1
    \prod_{j \in \mc{V}_{bt}} C_{2,j}) \cdot (U_0 \prod_{j \in \mc{V}_{bt}}
    U_{2,j})^{s'} \big)$.

\item [\tb{Decrypt}($CT_{t}, DK_{id,t}, PP$):] Let $CT_{t} = ( C, C_1, \{
    C_{2,i} \}, E_{t} )$ be an update ciphertext for time $t$ and
    $DK_{id,t} = ( PSK_{id,x}, PKU_{t,x} )$ be a decryption key where
    $PSK_{id,x} = ( \{ K_{i,0}, K_{i,1} \}_{i \in [d]} )$ and $PKU_{t,x} =
    ( U_0, U_1 )$.
    It computes a first component $A_1 = \prod_{\rho(i) \in S} (e(C_1,
    K_{i,0}) / e(C_{i,0}, K_{i,1}))^{w_i}$. Next, it computes a second
    component $A_2 = e(C_1, U_0) / e(E_t, U_1)$.
    It outputs a decrypted message $m$ by computing $C / (A_1 \cdot A_2)$.

\item [\tb{Revoke}($id, t, RL, ST$):] Let $id$ be an identity and $t$ be
    revocation time. It adds $(id, t)$ to $RL$ and returns the updated
    revocation list $RL$.
\end{description}

A cloud storage system consists of four entities: a trusted center, a cloud
server, a data owner, and a data user. The trusted center first runs
\tb{Setup} to obtain $MK$ and $PP$ and publishes $PP$. For each data user,
the trusted center runs \tb{GenKey} to generate each private key $SK_{id}$ of
each data user $id$. For each current time epoch $t$, the trusted center
periodically runs \tb{UpdateKey} to obtain a key update $KU_t$ for
non-revoked users and publishes $KU_t$. If a user with $id$ is revoked, then
the trusted center runs \tb{Revoke} to add this user to the revoked list. A
data owner who has a message $m$ can create an original ciphertext $CT_t$ at
time $t$ by running \tb{Encrypt} and then he securely sends $CT_t$ to the
cloud server for storing it in cloud storage. If a data user want to access
to the ciphertext in the cloud storage on time $t'$, then the cloud server
first computes an updated ciphertext $CT_{t'}$ by running \tb{UpdateCT} on
the original ciphertext and gives the updated ciphertext to the data user.
Next, the data user can decrypt the ciphertext $CT_{t'}$ by running
\tb{Decrypt} if he has a private key and his private key $SK_{id}$ is not yet
revoked in a key update $KU_{t'}$ on time $t'$.

\subsection{Security Model}

We describe the security model of the RABE scheme as defined by Xu et al.
\cite{XuYMD18}. The selective IND-RABE-CPA security is defined as the
following game between a challenger $\mc{C}$ and an adversary $\mc{A}$:

\vs\noindent \tb{Init}: $\mc{A}$ first submits a challenge attribute set
$S^*$.

\noindent \tb{Setup}: $\mc{C}$ generates an empty revocation list $rl$, a
state $ST$, a master key $MK$, and public parameters $PP$ by running the
setup algorithm \tb{Setup}$(\lambda, \mc{N}, \mc{T}, n)$, and then it gives
$PP$ to $\mc{A}$.

\noindent \tb{Phase 1}: $\mc{A}$ may adaptively request private key, key
update, and revocation queries to the following oracles.
    \begin{itemize}
    \item The private key generation oracle takes an identity $id$ and an
        access structure $\mathbb{A}$ as input, and returns a private key
        $SK_{id}$ by running \tb{GenKey}$(id, \mathbb{A}, MK, ST, PP)$.

    \item The key update oracle takes time $t$ as input, and returns a key
        update $KU_t$ by running \tb{UpdateKey}$(t, RL, \lb MK, ST, PP)$.

    \item The revocation oracle takes a revoked identity $id$ and time $t$
        as input, and updates the revocation list by running
        \tb{Revoke}$(id, t, RL, ST)$.
    \end{itemize}

\noindent \tb{Challenge}: $\mc{A}$ submits challenge time $t^* \in \mc{T}$
and two challenge messages $m_0^*, m_1^*$ of the same size with the following
constraints:
    \begin{itemize}
    \item If a private key for an identity $id$ and an access structure
        $\mathbb{A}$ such that $\mathbb{A}(S^*) = 1$ was queried to the
        private key generation oracle, then the revocation of the identity
        $id$ must be queried on time $t$ such that $t \leq t^*$ to the
        revocation oracle.

    \item If a non-revoked user with the identity $id$ whose access
        structure $\mathbb{A}$ satisfies the challenge attribute set $S^*$,
        then $id$ should not be previously queried to the private key
        generation oracle.
    \end{itemize}
$\mc{C}$ flips a random bit $b \in \bits$ and creates a challenge ciphertext
$CT^*$ by running \tb{Encrypt}$(S^*, t^*, m_b^*, PP)$.

\noindent \tb{Phase 2}: $\mc{A}$ continues to request private key, key
update, and revocation queries. $\mc{C}$ handles the queries as the same as
before and following the restrictions defined in the challenge phase.

\noindent \tb{Guess}: Finally $\mc{A}$ outputs a bit $b'$.

An RABE scheme is selectively IND-RABE-CPA secure if for any probabilistic
polynomial time adversary $\mc{A}$, the advantage of $\mc{A}$ in the above
RABE game defined as $\Pr[b = b'] - \frac{1}{2}$ is negligible in the
security parameter $\lambda$.

\section{Ciphertext Outdate Attack}

In this section, we show that there is an effective adversary against the
RABE scheme of Xu et al. \cite{XuYMD18}. To do this, we first analyze the
properties of two time encoding functions, \tb{TEncode} and \tb{CTEncode},
proposed by Xu et al. through the following two lemmas. The key to the
following two lemmas is that a challenge original ciphertext associated with
challenge time $t^*$ can be changed to a ciphertext element associated with
the past time $t$.

\begin{lemma} \label{lem:time-encode}
Let $st$ be a bit string in $\bits^{\log_2 \mc{T}}$ and $\mc{V}_{st}$ be the
set of all $j$ such that $st[j] = 0$. There exist time periods $t, t^* \in
\mc{T}$ such that $t < t^*$ and $\mc{V}_{bt} \subseteq \mc{V}_{et^*}$ where
$bt$ is obtained from $\tb{TEncode}(t, \mc{T})$ and $et^*$ is obtained from
$\tb{CTEncode}(t^*, \mc{T})$.
\end{lemma}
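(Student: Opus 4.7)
The plan is to prove the lemma constructively by exhibiting an explicit pair $(t, t^*)$ whose encodings witness the desired inclusion. The structural fact about \tb{CTEncode} that I would isolate first is: once the Boolean flag $chk$ becomes \emph{true} inside \tb{CTEncode}, it is never reset, so every subsequent position falls into the ``otherwise'' branch and is set to zero. In particular, if $bt^*[1] = 0$ then at $j=1$ the guard $bt^*[j] = 1$ fails, so $chk$ is set to \emph{true} and $et^*[1] = 0$; by induction on $j$, the same guard fails at every later iteration and $et^*[2] = \cdots = et^*[\log_2 \mc{T}] = 0$. Hence whenever $bt^*[1] = 0$, we have $et^* = 0^{\log_2 \mc{T}}$ and $\mc{V}_{et^*} = [\log_2 \mc{T}]$.

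Given this observation I would take $t^* = 1$ and $t = 0$, both of which lie in $\mc{T}$ as soon as $\mc{T} \geq 2$. Then $bt^* = \tb{TEncode}(1,\mc{T}) = 0^{\log_2 \mc{T} - 1} 1$ has $bt^*[1] = 0$, so $\mc{V}_{et^*} = [\log_2 \mc{T}]$ by the observation above. Since $\mc{V}_{bt} \subseteq [\log_2 \mc{T}]$ always holds by definition, we conclude $\mc{V}_{bt} \subseteq \mc{V}_{et^*}$, and $t < t^*$ is immediate. The running example already given in the paper ($t = 5$, $t^* = 7$ with $\mc{T} = 2^5$) fits exactly the same template, since $bt^* = 00111$ starts with a zero and therefore yields $et^* = 00000$.

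There is no real obstacle in this argument; the only care needed is to read the \tb{CTEncode} pseudocode carefully so that the latching behaviour of $chk$ is exploited correctly, which is why I would spell out the short induction on $j$ explicitly. As a by-product, the same argument delivers a broad family of witnesses, namely every pair with $0 \leq t < t^* < \mc{T}/2$, which is the phenomenon that the ciphertext outdate attack in the next section will leverage: the adversary can freely choose a challenge time $t^*$ whose \tb{CTEncode} image is identically zero and then match it with any earlier time $t$.
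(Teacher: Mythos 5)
Your proposal is correct and follows essentially the same route as the paper: both arguments rest on the observation that \tb{CTEncode} latches $chk$ at the first zero bit, so any $t^*$ with leading bit $0$ (i.e.\ $t^* < \mc{T}/2$) yields $et^* = 0^{\log_2 \mc{T}}$ and hence $\mc{V}_{et^*} = [\log_2 \mc{T}]$, making the inclusion trivial; the paper takes an arbitrary pair $0 < t < t^* < \mc{T}/2$ where you fix $(t,t^*) = (0,1)$. The only nitpick is that the paper's convention treats time periods as positive (it insists on $0 < t$), so a safer explicit witness would be one with $1 \leq t < t^* < \mc{T}/2$ — which your closing remark about the general family already covers.
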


\begin{proof}
For the notational simplicity, we set $\mc{T} = 2^{\tau}$. To prove this
lemma, we first randomly choose time periods $t, t^*$ satisfying $0 < t < t^*
< 2^{\tau - 1}$. Then, we run \tb{TEncode}$(t, \mc{T})$ to get a bit string
$bt \in \bits^{\tau}$ and \tb{TEncode}$(t^*, \mc{T})$ to get another bit
string $bt^* \in \bits^{\tau}$. Since the time periods $t$ and $t^*$ are
smaller than $2^{\tau - 1}$, the first bit value $bt[1]$ and $bt^*[1]$ of the
two bit strings $bt$ and $bt^*$ have the same bit $0$. Now let 's analyze the
bit string $et^*$ obtained by running \tb{CTEncode}$(t^*, \mc{T})$. In the
\tb{CTEncode} algorithm, the algorithm finds the first position with a bit
value of 0 in the $bt^*$ bit string and then sets all subsequent bit values
to a value of zero. Thus, the resulting bit string $et^*$ becomes a bit
string with $0$ value in all positions since $bt^*[1] = 0$ is already fixed.
Therefore, the set $\mc{V}_{et^*}$ consists of $\{ 1, 2, \ldots, \tau \}$ and
the set of $\mc{V}_ {bt}$ should be a subset of $\{ 1, 2, \ldots, \tau \}$
since $0 < t$.
\end{proof}

As an example, let us look at the encoding results for two time periods $t =
5$ and $t^* = 7$ when the maximum time is $\mc{T} = 2^5$. Since $\mc{T} =
2^5$, the function \tb{TEncode}$(t = 5, \mc{T})$ returns a bit string $bt =
00101$, the function \tb{TEncode}$(t^* = 7, \mc{T})$ returns a bit string
$bt^* = 00111$, and the function \tb{CTEncode}$(t^* = 7, \mc{T})$ returns the
bit string $et^* = 00000$. Because $t = 5 < t^* = 7$ and $\mc{V}_{bt} = \{ 1,
2, 4 \} \subseteq \mc{V}_{et^*} = \{ 1, 2, 3, 4, 5 \}$, we can easily show
that two time periods $t = 5$ and $t^* = 7$ are one example of Lemma
\ref{lem:time-encode}.

\begin{lemma} \label{lem:cipher-outdate}
If there exist time periods $t, t^* \in \mc{T}$ such that $t < t^*$ and
$\mc{V}_{bt} \subseteq \mc{V}_{et^*}$ where $bt$ is obtained from
\tb{TEncode}$(t, \mc{T})$ and $et^*$ is obtained from \tb{CTEncode}$(t^*,
\mc{T})$, then a ciphertext element $E_t$ for time $t$ can be derived from an
original ciphertext $CT^*$ for time $t^*$.
\end{lemma}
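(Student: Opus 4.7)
The plan is to give a constructive derivation: starting from the elements already present in $CT^*$, I will exhibit an explicit product that yields an element of exactly the form $(U_0 \prod_{j \in \mc{V}_{bt}} U_j)^s$, which is precisely the shape the ciphertext component $E_t$ must have in order to pair correctly with a key update $PKU_{t,x}$ for past time $t$. The subset hypothesis $\mc{V}_{bt} \subseteq \mc{V}_{et^*}$ is exactly what guarantees that every factor I need is already sitting in $CT^*$.

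First, I would recall the structure of an original ciphertext produced by \tb{Encrypt} on $(S, t^*, m, PP)$: it includes $E_1 = U_0^s$ and the collection $\{ E_{2,j} = U_j^s \}_{j \in \mc{V}_{et^*}}$, where $s$ is the random encryption exponent. Next, I would look at how the \tb{Decrypt} algorithm consumes the $E_t$ component: it forms $A_2 = e(C_1, U_0)/e(E_t, U_1)$ against $PKU_{t,x} = \big( g_2^{\alpha-\alpha_x}(U_0 \prod_{j \in \mc{V}_{bt}} U_j)^r,\; g^r \big)$. Expanding $e(C_1, U_0)$ shows that the $(U_0 \prod_{j \in \mc{V}_{bt}} U_j)^r$ term cancels only when $E_t$ equals $(U_0 \prod_{j \in \mc{V}_{bt}} U_j)^s$. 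So that is the target form.

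Then I would define
\[
E_t \;:=\; E_1 \cdot \prod_{j \in \mc{V}_{bt}} E_{2,j}.
\]
By the hypothesis $\mc{V}_{bt} \subseteq \mc{V}_{et^*}$, each index $j$ in the product indeed appears in the set $\mc{V}_{et^*}$ of indices of $E_{2,j}$ elements included in $CT^*$, so every factor on the right-hand side is explicitly available to anyone holding $CT^*$. Substituting the values gives $E_t = U_0^s \cdot \prod_{j \in \mc{V}_{bt}} U_j^s = (U_0 \prod_{j \in \mc{V}_{bt}} U_j)^s$, which matches the target form exactly and re-uses the same blinding exponent $s$ already committed to in $C_1 = g^s$.

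There is no real analytic obstacle here; the entire content of the lemma is the algebraic observation in the previous paragraph. The closest thing to a subtlety is making sure I am aiming at the right target shape: one must read the \tb{Decrypt} algorithm (rather than \tb{UpdateCT}) to see what form $E_t$ is expected to have, since \tb{UpdateCT} itself refuses inputs with $t' < t$ and therefore appears, superficially, to forbid exactly this derivation. It is precisely this mismatch that the lemma exposes: the pairing equation of \tb{Decrypt} does not enforce the monotonicity that \tb{UpdateCT} tries to impose, so the attacker can bypass \tb{UpdateCT} and synthesize a valid past-time component directly from the published $E_1$ and $\{E_{2,j}\}$.
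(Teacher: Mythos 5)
Your proposal is correct and takes essentially the same route as the paper: both derive $E_t$ by composing $E_1$ with the $E_{2,j}$ for $j \in \mc{V}_{bt}$, all of which are present in $CT^*$ precisely because $\mc{V}_{bt} \subseteq \mc{V}_{et^*}$. Your write-up is in fact more explicit than the paper's (which merely points at the \tb{UpdateCT} formula), since you verify that the resulting element $(U_0 \prod_{j \in \mc{V}_{bt}} U_j)^s$ has exactly the shape that \tb{Decrypt} requires.
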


\begin{proof}
As the same as in Lemma \ref{lem:time-encode}, we randomly choose time
periods $t$ and $t^*$ to satisfy $0 < t < t^* < 2^{\tau - 1}$. In the RABE
scheme of Xu et al., the original ciphertext $CT^*$ for the time $t^*$
includes ciphertext elements $E_1$ and $E_{2,j}$ for all $j \in
\mc{V}_{et^*}$. As shown in the previous Lemma \ref{lem:time-encode}, the set
$\mc{V}_{et^*}$ is defined as a set of all indices from $1$ to $\tau$ since
all bit values of $et^*$ are composed of $0$. In the description of the
\tb{UpdateCT} algorithm, the ciphertext element $E_t$ can be derived by
composing the elements $E_1$ and $E_{2,j}$ for all $j \in \mc{V}_{bt}$ of
$CT^*$. Therefore, it is possible to construct the element $E_t$ since
$\mc{V}_{bt} \subseteq \mc{V}_{et^*}$ is satisfied by Lemma
\ref{lem:time-encode}.
\end{proof}

By using the previous two lemmas, we show that a cloud server which stores
original ciphertexts generated by data owner can obtain sensitive information
of the original ciphertexts by gathering revoked credentials of revoked
users.

\begin{theorem}
There exists a probabilistic polynomial-time adversary that can break the
selective IND-RABE-CPA security of Xu et al.'s RABE scheme.
\end{theorem}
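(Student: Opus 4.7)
The plan is to construct an explicit adversary $\mc{A}$ that wins the selective IND-RABE-CPA game with probability $1$, by leveraging Lemma \ref{lem:time-encode} and Lemma \ref{lem:cipher-outdate} under a careful ordering of oracle queries. The core observation is that the challenge-phase constraint only forces any ``satisfying'' identity $id$ (with $\mathbb{A}(S^*) = 1$) to be revoked at \emph{some} time $t_r \leq t^*$; it does not prevent $id$ from being non-revoked at an earlier epoch $t < t_r$. Thus $\mc{A}$ can still assemble a legitimate decryption key for that earlier $t$, and the two lemmas let it fabricate a ciphertext for $t$ out of the challenge ciphertext for $t^*$, thereby bypassing the intended semantics of \tb{UpdateCT}.

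Concretely, $\mc{A}$ first fixes two time periods $t, t^* \in \mc{T}$ satisfying $t < t^*$ and $\mc{V}_{bt} \subseteq \mc{V}_{et^*}$ as provided by Lemma \ref{lem:time-encode}, submits an arbitrary challenge attribute set $S^*$ in \tb{Init}, and obtains $PP$. In \tb{Phase 1}, $\mc{A}$ (i) queries the private key generation oracle on a single pair $(id, \mathbb{A})$ with $\mathbb{A}(S^*) = 1$ to obtain $SK_{id}$; (ii) queries the key update oracle on $t$ \emph{while $RL$ is still empty}, so the returned $KU_t$ covers the leaf assigned to $id$; and (iii) queries the revocation oracle on $(id, t_r)$ for some $t_r$ with $t < t_r \leq t^*$, thereby satisfying the game's revocation restriction. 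After submitting $t^*, m_0^*, m_1^*$ in the \tb{Challenge} phase and receiving $CT^* = (C, C_1, \{C_{2,i}\}, E_1, \{E_{2,j}\}_{j \in \mc{V}_{et^*}})$, $\mc{A}$ applies Lemma \ref{lem:cipher-outdate} to compute $E_t = E_1 \cdot \prod_{j \in \mc{V}_{bt}} E_{2,j}$ and assembles an outdated ciphertext $\widetilde{CT}_t = (C, C_1, \{C_{2,i}\}, E_t)$. Finally, $\mc{A}$ derives $DK_{id,t}$ from $SK_{id}$ and $KU_t$ and outputs the bit obtained by running \tb{Decrypt}$(\widetilde{CT}_t, DK_{id,t}, PP)$.

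The main point to verify, and what I see as the only non-routine step, is that this schedule simultaneously respects every restriction of the game and that $\widetilde{CT}_t$ decrypts correctly under $DK_{id,t}$. For the constraints, the adversary revokes its one satisfying identity at $t_r \leq t^*$, meeting the first restriction; the second restriction is vacuous because $id$ is revoked and no other satisfying identity is ever queried. For correctness, the defining formulas of $E_1$ and $E_{2,j}$ in \tb{Encrypt} give $E_1 \prod_{j \in \mc{V}_{bt}} E_{2,j} = (U_0 \prod_{j \in \mc{V}_{bt}} U_j)^s$, which is exactly the $E_t$ component that an honest \tb{UpdateCT} run would produce (modulo the re-randomizer $s'$, which does not affect decryptability). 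Consequently $\widetilde{CT}_t$ is a well-formed updated ciphertext for $S^*$ at time $t$ with the same underlying randomness $s$, and the scheme's own decryption correctness then yields $m_b^*$ with probability $1$. Hence $\Pr[b = b'] = 1$ and the advantage is $1/2$, which is plainly non-negligible, establishing the theorem.
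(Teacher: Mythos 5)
Your proposal is correct and follows essentially the same route as the paper: pick $t < t^*$ via Lemma \ref{lem:time-encode}, splice $E_t = E_1 \prod_{j \in \mc{V}_{bt}} E_{2,j}$ out of the challenge ciphertext via Lemma \ref{lem:cipher-outdate}, and decrypt the resulting outdated ciphertext with a decryption key $DK_{id,t}$ built from a private key that is revoked only at some $t_r \leq t^*$ but was still covered by $KU_t$. If anything, your version is the cleaner one, since you realize the paper's informal ``gather revoked credentials from the Internet'' step as an explicit, constraint-respecting schedule of oracle queries (key update at $t$ before the revocation query) inside the formal game.
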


\begin{proof}
The main idea of our attack is for an inside adversary, which is a cloud
server, to derive an outdated ciphertext $CT_t$ of past time $t$ from the
original challenge ciphertext $CT^*$ of challenge time $t^*$ such that $t <
t^*$. If the adversary gathers revoked credentials of revoked users from the
Internet, then it can decrypt the outdated ciphertext by combining the
revoked credential with publicly available key updates.

A detailed adversary algorithm $\mc{A}$ breaking the RABE scheme of Xu et al.
is described as follows:
\begin{enumerate}
\item Initially $\mc{A}$ submits a challenge attribute set $S^*$ and
    receives public parameters $PP$.

\item After that, $\mc{A}$ gathers revoked credentials (private keys) of
    revoked users from the Internet. Note that revoked credentials of users
    can be available in Internet since they are safely revoked by the
    system although these credentials are intensionally revealed by an
    hacker or accidently revealed by a honest user.
    Let $SK_{id^*}$ be one of the obtained revoked private keys with an
    identity $id^*$ and an access policy $\mathbb{A}$ that satisfies
    $\mathbb{A}(S^*) = 1$. Because this private key $SK_{id^*}$ is revoked
    by a trusted center, we have that it is revoked on time $t^*$ in a key
    update $KU_{t^*}$, but not yet revoked in a key update $KU_{t}$ such
    that $t < t^*$. $\mc{A}$ sets past time $t$ and challenge time $t^*$
    that satisfies the condition described in Lemma \ref{lem:time-encode}.

\item Next $\mc{A}$ also gathers key updates from the Internet since each
    key update $KU_t$ is publicly broadcasted by the trusted center per
    each time period $t$. It now derive a decryption key $DK_{id^*,t}$ by
    combining the gathered private key $SK_{id^*}$ and the gathered key
    update $KU_{t}$ since $SK_{id^*}$ was not yet revoked on past time $t <
    t^*$. Note that it cannot derive $DK_{id^*,t^*}$ for the challenge time
    $t^*$ because $SK_{id^*}$ was already revoked on time $t^*$.

\item In the challenge step, $\mc{A}$ submits the challenge time $t^*$,
    randomly chosen two challenge messages $m_0^*, m_1^*$, and receives a
    challenge original ciphertext $CT^*$.
    Let $CT^* = (C, C_1, \{ C_{2,i} \}, E_1, \{ E_{2,j} \}_{j \in
    \mc{V}_{et^*}})$ be the original ciphertext on the time $t^*$ where
    $et^*$ is obtained from \tb{CTEncode}$(t^*, \mc{T})$. Consider the set
    $\mc{V}_{bt}$ for the fixed past time $t$. The ciphertext element $E_t$
    can be derived from the original ciphertext $CT^*$ by Lemma
    \ref{lem:cipher-outdate} because $\mc{V}_{bt} \subseteq \mc{V}_{et^*}$
    is satisfied. Thus $\mc{A}$ can easily derive an outdated ciphertext
    $CT_t = (C, C_1, {C_ {2, i}}, E_t)$ associated with the challenge
    attribute set $S^*$ and the past time $t$ by performing
    re-randomization.

\item Finally, $\mc{A}$ obtains the message $m^*$ by decrypting $CT_t$
    using $DK_{id^*,t}$ and outputs a bit $b'$ by comparing $m^*$ with the
    challenge messages.
\end{enumerate}

Now we analyze the success probability of the adversary $\mc{A}$ described
above. As shown above, the decryption succeeds because the behavior of
$\mc{A}$ satisfy the constraints of the security model and the outdated
ciphertext is also a valid ciphertext with the correct distribution.
Therefore, $\mc{A}$ wins the RABE security game since the advantage of
$\mc{A}$ is $1/2$.
\end{proof}

\subsection{Discussions}

In this section, we consider two possible defences against the ciphertext
outdate attacks.

\vs\noindent \tb{Weaker Security Model.} The reason for the above attack is
that a cloud server can decrypt a stored original ciphertext by deriving an
outdated ciphertext from the original ciphertext and using gathering revoked
credentials of revoked users. A naive approach to prevent this devastating
attack is to weaken the security model of RABE for cloud storage. In other
words, if an inside attacker such as a cloud server performs an attack, then
the inside attacker should be prohibited to gathering revoked credentials of
revoked users from the Internet. However, this weaker security model only
provides a limited security since it is very hard to forbid the inside
attacker to (passively) gathering some useful information from the Internet
\cite{PopaRZB11}.

\vs\noindent \tb{Self-Updatable Encryption.} Sahai et al. \cite{SahaiSW12}
first devised a new encryption scheme that supports ciphertext updates in
cloud storage by using the delegation feature of ABE. After that, Lee et al.
\cite{LeeCLPY13} proposed a self-updatable encryption scheme by combining a
binary tree with the key delegation feature of hierarchical identity-based
encryption. A self-updatable encryption scheme provides the most efficient
ciphertext update and it is proven to be secure against collusion attacks. In
order to enhance the security of Xu et al.'s RABE scheme, their RABE scheme
can be modified to use a self-updatable encryption scheme instead of using
time encoding functions. In addition, if a self-updatable encryption scheme
is used, a data owner does not need to use a secure channel when he sends a
ciphertext to cloud storage.

\section{Conclusion}

In this paper, we showed that a cloud server can perform the ciphertext
outdate attack to the RABE scheme of Xu et al. This attack was possible
because the cloud server is not a fully trusted entity and it can derive
another ciphertext associated with past time from the original ciphertext
stored by a user. Although Xu et al. showed that their RABE scheme is secure
against outside attackers, they didn't consider the cloud server can be an
inside attacker in the security proof. One naive defence against this attack
is to consider a weaker security model where an insider attacker who has
access to the original ciphertext cannot obtain revoked credentials of users
by preventing the inside attacker from accessing to the Internet. A better
defence for this attack is to use the self-updatable encryption scheme of Lee
et al. \cite{LeeCLPY13} for efficient ciphertext updating since an RABE
scheme that uses a self-updatable encryption scheme can be secure against
collusion attacks.

\bibliographystyle{plain}
\bibliography{cipher-outdate-attack}

\end{document}